\newtheorem{theorem}{Theorem}[section]
\newtheorem{proposition}[theorem]{Proposition}
\newtheorem{definition}[theorem]{Definition}
\newtheorem{remark}[theorem]{Remark}
\newcommand{\rfb}[1]{\mbox{\rm
		(\ref{#1})}\ifx\undefined\stillediting\else:\fbox{$#1$}\fi}
\newfont{\roma}{cmr10 scaled 1200}
\newcommand{\nline}  {{\mathbb N}}
\newcommand{\rline}  {{\mathbb R}}
\newcommand{\tline}  {{\mathbb T}}
\newcommand{\dd}   {{\rm d}\hbox{\hskip 0.5pt}}
\newcommand{\Ascr} {{\mathcal A}}
\newcommand{\Bscr} {{\mathcal B}}
\newcommand{\Dscr} {{\mathcal D}}
\newcommand{\Lscr} {{\mathcal L}}
\newcommand{\Wscr} {{\mathcal W}}
\newcommand{\mm}    {{\hbox{\hskip 0.5pt}}}
\newcommand{\m}     {{\hbox{\hskip 1pt}}}
\newcommand{\bluff} {{\hbox{\raise 15pt \hbox{\mm}}}}
\newcommand{\sbluff}{{\hbox{\raise  7pt \hbox{\mm}}}}
\newcommand{\FORALL} {{\hbox{$\hskip 11mm \forall \;$}}}
\title{\LARGE \bf Flatness-based motion planning for a non-uniform moving cantilever Euler-Bernoulli beam with a tip-mass
}
\author{Soham Chatterjee, Aman Batra and Vivek Natarajan
	\thanks{A. Batra is supported by the Prime Minister's Research Fellowship via the grant RSPMRF0262.}
	\thanks{S. Chatterjee ({\tt\small soham.chatterjee@iitb.ac.in}), A. Batra ({\tt\small aman.batra@iitb.ac.in}) and V. Natarajan ({\tt\small vivek.natarajan@iitb.ac.in}) are with the Centre for Systems and Control Engineering, Indian Institute of Technology Bombay, Mumbai 400076, India.}
}
\begin{document}
\maketitle
\thispagestyle{empty}
\pagestyle{empty}
\begin{abstract}
Consider a non-uniform Euler-Bernoulli beam with a tip-mass at one end and a cantilever joint at the other end. The cantilever joint is not fixed and can itself be moved along an axis perpendicular to the beam. The position of the cantilever joint is the control input to the beam.
The dynamics of the beam is governed by a coupled PDE-ODE model with boundary input. On a natural state-space, there exists a unique state trajectory for this beam model for every initial state and each smooth control input which is compatible with the initial state. In this paper, we study the motion planning problem of transferring the beam from an initial state to a final state over a prescribed time-interval. We address this problem by extending the generating functions approach to flatness-based control, originally proposed in the literature for motion planning of parabolic PDEs, to the beam model. We prove that such a transfer is possible if the initial and final states belong to a cer-tain set, which also contains steady-states of the beam. 
We illust-rate our theoretical results using simulations and \vspace{2mm} experiments.

\end{abstract}

\begin{keywords}
Coupled PDE-ODE model, flatness, generating functions, flexible structures, spatially-varying coefficients.
\end{keywords}
\section{Introduction} \label{intro}

Consider the following model of a non-uniform moving cantilever Euler-Bernoulli beam with a tip-mass at one end and a cantilever joint at the other end:
\begin{align}
 & \rho(x) w_{tt}(x,t)+(EI w_{xx})_{xx}(x,t) = 0, \label{eq:beam1}\\
 & \hspace{2mm} m w_{tt}(0,t)+(EI w_{xx})_x(0,t)=0, \label{eq:beam2}\\
 & \hspace{2mm} J w_{xtt}(0,t) - EI(0)w_{xx}(0,t) = 0, \label{eq:beam3} \\
 & \hspace{2mm} w(L,t) = f(t), \qquad w_x(L,t)=0. \label{eq:beam4}
\end{align}
Here $L>0$ is the length of the beam, $w(x,t)$ is the displacement of the beam at the location $x\in [0,L]$ and time $t\in(0,\infty)$, $\rho(x)$ and $EI(x)$ are the mass per unit length and flexural rigidity, respectively, of the beam at $x\in[0,L]$, and $m>0$ and $J>0$ are the mass and moment of inertia, respectively, of the tip-mass located at the $x=0$ end of the beam. We suppose that $\rho \in C^4[0,L]$, $EI \in C^4[0,L]$ and they are strictly positive, i.e $\inf_{x\in[0,L]}\rho(x)>0$ and $\inf_{x\in[0,L]}EI(x)>0$. The displacement of the cantilever joint at the $x=L$ end of the beam is determined by the scalar control input $f$. The coupled PDE-ODE model \eqref{eq:beam1}-\eqref{eq:beam4} governs the dynamics of engineering systems which have a moving cantilever beam such as single-axis flexible cartesian robots. Figure 1 shows an experimental setup consisting of a non-uniform moving cantilever beam with a tip-mass. \vspace{-3mm}

$$\includegraphics[width=0.45\textwidth]{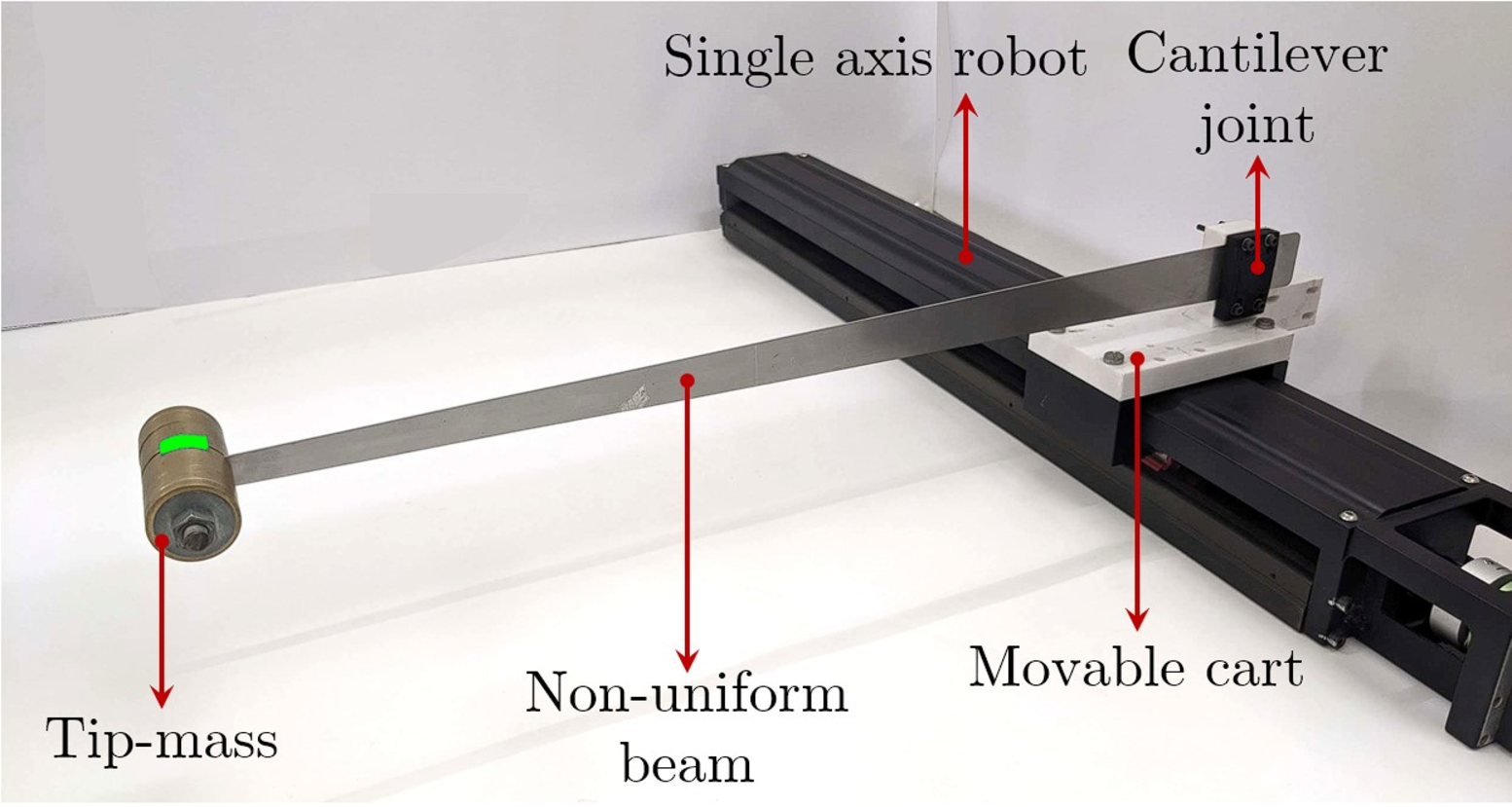} $$
\centerline{ \parbox{3.2in}{\small
Figure 1. Experimental setup of a non-uniform moving cantilever beam with tip-mass. The green marker on the tip-mass is used to track the tip-mass position with a high speed \vspace{3mm} camera. }}

There exists a natural state-space $Z$ for the non-uniform moving cantilever Euler-Bernoulli beam model \eqref{eq:beam1}-\eqref{eq:beam4} in which it has a unique state trajectory for every initial state $z_0$ and each
smooth control input $f$ which is compatible with $z_0$, see Section \ref{sec2}. In this paper, we study the motion planning problem of computing a control input $f$ which transfers  \eqref{eq:beam1}-\eqref{eq:beam4} from an initial state $z_0$ to a final state $z_T$ over a prescribed time-interval $[0,T]$. We solve this problem for a set of initial and final states in $Z$, which includes the steady-states of the beam.

Over the past three decades, flatness has emerged as a predominant technique for solving motion planning problems for dynamical systems governed by PDEs, including flexible structures. The main idea of flatness is to express the state and the input of the PDE as an infinite linear combination of a flat output and its time derivatives. Then, based on the desired motion, an appropriate trajectory is selected for the flat output using which the input necessary to execute the motion is computed. Using the transform approach (Laplace transform or Mikusi\'nski's operational calculus) to flatness, motion planning problems have been solved for Euler-Bernoulli beams with fixed cantilever joints in \cite{MeThKu:2008}, rotating cantilever joints in \cite{AoFlMoRoRu:1997}, \cite{BaLy:2008}, \cite{LyWa:2004} and \cite{RiTz:2009} and translating cantilever joints in \cite{BaUlRu:2011}. More recently, the power series approach to flatness has been used in \cite{BaZh:2014}, \cite{BaZhZh:2016}, \cite{KoGeSc:2022} to solve motion planning problems for Euler-Bernoulli beams with some other boundary conditions. 
The Riesz spectral approach to flatness, see \cite{Me:2012}, is in general applicable to PDEs with spatially-varying coefficients. However, addressing motion planning problems for non-uniform Euler-Bernoulli beams using this approach requires spectral assumptions that are hard to verify \cite{MeScKu:2010}. Moreover, the admissibility assumption on the control operator in \cite{MeScKu:2010} does not hold for our beam model \!\! \eqref{eq:beam1}-\eqref{eq:beam4}.

Motion planning of 1D PDEs using the generating functions approach to flatness involves solving a sequence of initial value ODEs recursively to obtain the generating functions and then expressing the input and solution of the PDE in terms of these functions. In principle, this approach is well-suited for PDEs with spatially-varying coefficients and it has been used to address a motion planning problem (null control problem) for 1D parabolic PDEs with highly irregular coefficients in \cite{MaRoRo:2016}. In the present work, we extend this approach to solve the motion planning problem described above for the non-uniform Euler-Bernoulli beam model \eqref{eq:beam1}-\eqref{eq:beam4}. 
The extension inherently leads to a certain infinite-order differential equation which we solve by establishing that a pair of infinite-order differential operators commute. We show that if the initial and final states belong to a certain subspace of the state-space which contains steady-states of the beam, then our motion planning problem has a solution for any prescribed time-interval. We illustrate our theoretical results using both simulations and experiments.

The rest of the paper is organized as follows: In Section \ref{sec2} we  establish the well-posedness of the Euler-Bernoulli beam model \eqref{eq:beam1}-\eqref{eq:beam4}. We define the generating functions for the beam model in Section \ref{sec3} and derive some estimates for them. Section \ref{sec4} contains our solution to the motion planning problem and in Section \ref{sec5} we present our numerical and experimental results.

\section{Well-posedness} \label{sec2}

In this section we establish the existence and uniqueness of solutions to the coupled PDE-ODE model \eqref{eq:beam1}-\eqref{eq:beam4}. Let $H^n(0,L)$ denote the usual Sobolev space of order $n$ on the interval $(0,L)$. A natural choice of state space for \eqref{eq:beam1}-\eqref{eq:beam4} is the Hilbert space
$$ Z = \{[u \ \  v \ \ \alpha \ \ \beta] \in H^2(0,L)\times L^2(0,L)\times \rline\times \rline \big| u_x(L)=0 \}.$$
For $z_1=[u_1 \ v_1 \ \alpha_1 \ \beta_1]$ and $z_2=[u_2 \ v_2 \ \alpha_2 \ \beta_2]$ in $Z$, the inner product $\langle z_1, z_2 \rangle_Z = \int_0^L EI(x) u_{1,xx}(x) u_{2,xx}(x) \dd x + \int_0^L u_1(x) u_2(x) \dd x  + \int_0^L \rho(x) v_1(x) v_2(x) \dd x + m\alpha_1\alpha_2 + J \beta_1\beta_2$.  We now define the notion of classical solutions for \eqref{eq:beam1}-\eqref{eq:beam4} in $Z$. \vspace{1mm}

\begin{definition}\label{def:classical}
\!Given $T\!>\!0$, $z_0\! \in\! Z$ and $f\!\in\! C([0,T];\rline)$, a function $z\in C([0,T];Z)$ is a \emph{classical solution} of \eqref{eq:beam1}-\eqref{eq:beam4} on the time interval $[0,T]$ for the initial state $z_0$ and input $f$ if $z(0)=z_0$ and
\begin{equation} \label{eq:wdetz}
 z(t)=[w(\cdot,t) \ \ w_t(\cdot,t) \ \ w_t(0,t) \ \ w_{xt}(0,t)] \qquad \forall\m t\in[0,T],
\end{equation}
where $w\in C([0,T];H^4(0,L))\cap C^1([0,T];H^2(0,L))\cap C^2([0,T];L^2(0,L))$ with $w(0,\cdot),w_x(0,\cdot)\in C^2([0,T];\rline)$ and $w$ and $f$ satisfy \eqref{eq:beam1}-\eqref{eq:beam4} for each $t\in(0,T)$. \vspace{1mm} \hfill$\square$
\end{definition}

Consider the following dense subspace of $Z$:
\begin{align}
 V =& \{[u \ \ v \ \ \alpha \ \ \beta] \in H^4(0,L)\times H^2(0,L) \times \rline \times \rline \m\big|\m \nonumber\\
 &\quad  u_x(L)=v_x(L)=0,\, v(0)=\alpha, \, v_x(0)=\beta\}. \label{eq:spaceV}
\end{align}
In the next proposition we will establish the existence and uniqueness of a classical solution for \eqref{eq:beam1}-\eqref{eq:beam4} when the initial state belongs to $V$ and the input $f$ is smooth and compatible with the initial state. Since the proof of the proposition is based on standard techniques, we omit detailed explanations in the proof.

\begin{proposition}\label{pr:wellposed}
Let $T>0$, an input $f\in C^3([0,T];\rline)$ and an initial state $z_0 = [u_0 \ \ v_0 \ \ \alpha_0 \ \ \beta_0]\in V$ with $f(0)=u_0(L)$ and $\dot f(0)=v_0(L)$ be given. There exists a unique classical solution $z\in C([0,T];Z)$ of \eqref{eq:beam1}-\eqref{eq:beam4} on the time interval $[0,T]$ for the initial state $z_0$ and input $f$.
\end{proposition}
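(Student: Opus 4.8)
The plan is to recast the coupled PDE--ODE system \eqref{eq:beam1}--\eqref{eq:beam4} as an abstract Cauchy problem on the Hilbert space $Z$, and then invoke standard semigroup theory together with a change of variables that homogenizes the boundary condition \eqref{eq:beam4}. First I would introduce a smooth lifting of the boundary data: pick a fixed function $\eta\in H^4(0,L)$ with $\eta(L)=1$, $\eta_x(L)=0$ and $\eta$ vanishing to high order near $x=0$ (so it does not interfere with the ODE boundary conditions \eqref{eq:beam2}--\eqref{eq:beam3}), and set $\tilde w(x,t)=w(x,t)-f(t)\eta(x)$. Then $\tilde w$ satisfies an inhomogeneous beam equation $\rho\tilde w_{tt}+(EI\tilde w_{xx})_{xx} = -\rho\ddot f\eta - f(EI\eta_{xx})_{xx}$ with \emph{homogeneous} boundary conditions $\tilde w(L,t)=\tilde w_x(L,t)=0$ at the cantilever end, and modified ODEs at $x=0$ (with forcing terms involving $f,\ddot f$ and the values of $\eta$ and its derivatives at $0$, which vanish by the choice of $\eta$).

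Next I would define the operator $A:\mathcal D(A)\subset Z\to Z$ that generates the dynamics: on the state $z=[u\ v\ \alpha\ \beta]$, $A$ acts as $Az = [v,\ -\rho^{-1}(EIu_{xx})_{xx},\ -m^{-1}(EIu_{xx})_x(0),\ J^{-1}EI(0)u_{xx}(0)]$, with domain essentially $V$ (the boundary conditions in \eqref{eq:spaceV} are exactly the compatibility and clamping conditions). The standard step is to verify that $A$ is skew-adjoint with respect to the energy inner product $\langle\cdot,\cdot\rangle_Z$ — the term $\int_0^L u_1u_2\,\dd x$ in the inner product is a harmless lower-order perturbation, so one really checks skew-symmetry of the principal part via integration by parts twice, with all boundary terms at $x=0$ absorbed by the ODE components $\alpha,\beta$ and those at $x=L$ killed by $u_x(L)=0$ and $v_x(L)=0$. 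By Stone's theorem $A$ generates a $C_0$-group $(\mathbb T(t))_{t\in\rline}$ on $Z$. The homogenized system then reads $\dot{\tilde z}(t) = A\tilde z(t) + g(t)$ where $g(t)$ depends on $f(t),\dot f(t),\ddot f(t)$ and, crucially, $f\in C^3$ ensures $g\in C^1([0,T];Z)$ — this is the regularity that upgrades the mild solution $\tilde z(t)=\mathbb T(t)\tilde z(0)+\int_0^t\mathbb T(t-s)g(s)\,\dd s$ to a classical one, provided $\tilde z(0)=z_0-[f(0)\eta\ \dot f(0)\eta\ 0\ 0]\in\mathcal D(A)=V$. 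Here the compatibility conditions $f(0)=u_0(L)$, $\dot f(0)=v_0(L)$ are precisely what guarantee $\tilde z(0)$ still satisfies $\tilde u(L)=\tilde v(L)=0$ and hence lies in $V$ (the other three conditions of $V$ are inherited from $z_0$ and the vanishing of $\eta$ near $0$).

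Finally I would translate the abstract regularity $\tilde z\in C([0,T];\mathcal D(A))\cap C^1([0,T];Z)$ back to the PDE statement: writing $w=\tilde w + f\eta$ one recovers $w\in C([0,T];H^4)\cap C^1([0,T];H^2)\cap C^2([0,T];L^2)$ and $w(0,\cdot),w_x(0,\cdot)\in C^2([0,T];\rline)$ from the component structure of $Z$ and $V$, and \eqref{eq:wdetz} holds by construction; uniqueness is immediate since any two classical solutions give two solutions of the same abstract Cauchy problem, whose difference solves $\dot\zeta=A\zeta$, $\zeta(0)=0$. \textbf{The main obstacle} I anticipate is the bookkeeping in the skew-adjointness verification — getting the domain of $A$ exactly right so that the boundary terms from the double integration by parts at $x=0$ combine perfectly with the ODE rows (the signs and the factors $m,J,EI(0)$ must match), and confirming $\mathcal D(A^*)=\mathcal D(A)$ rather than merely $A\subset -A^*$; the lifting argument and the $C^3\Rightarrow C^1$ regularity transfer are routine once that is in place. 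Since the paper explicitly says standard techniques are used and detailed explanations are omitted, I would present these steps concisely, citing a standard reference for the semigroup generation and inhomogeneous-problem regularity results.
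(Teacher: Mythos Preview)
Your plan is essentially the paper's proof: lift the boundary data with a smooth profile vanishing to high order at $x=0$, write the homogenized system as an abstract Cauchy problem, invoke semigroup generation (the paper simply cites \cite{ZhWe:2011} rather than running Stone's theorem), and use the $C^1$-in-time forcing to upgrade the mild solution to a classical one via \cite{Pazy:1983}.

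There is, however, one concrete error in your setup that would make the skew-adjointness verification fail. The homogenized problem carries \emph{two} boundary conditions at $x=L$, namely $\tilde w(L,t)=0$ and $\tilde w_x(L,t)=0$, but the space $Z$ only encodes $u_x(L)=0$. You must therefore pose the homogenized evolution on the closed subspace $\tilde Z=\{[u\ v\ \alpha\ \beta]\in Z\mid u(L)=0\}$ with domain $\Dscr(A)=\{z\in V\mid u(L)=v(L)=0\}$, exactly as the paper does. With domain merely $V$ on $Z$, the double integration by parts leaves the boundary term $-(EIu_{1,xx})_x(L)\,v_2(L)$ at $x=L$, which is \emph{not} killed by $u_x(L)=v_x(L)=0$ alone; your sentence ``those at $x=L$ killed by $u_x(L)=0$ and $v_x(L)=0$'' is therefore false as stated. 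The compatibility hypotheses $f(0)=u_0(L)$ and $\dot f(0)=v_0(L)$ are precisely what place $\tilde z_0$ in this smaller domain $\Dscr(A)$ (not just in $V$), and that is the point you should be making when you invoke them. Once you restrict to $\tilde Z$ the rest of your outline goes through unchanged.
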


\begin{proof}
Choose $\nu\in C^\infty[0,L]$ such that $\nu(0)=\nu_x(0)=\nu_{xx}(0)= \nu_{xxx}(0)=\nu_x(L)=0$ and $\nu(L)=1$. Replacing $w(x,t)$ with $\tilde w(x,t)+\nu(x)f(t)$ formally in \eqref{eq:beam1}-\eqref{eq:beam4} we get that for $x\in (0,L)$ and $t>0$,
\begin{align}
  &\rho(x)\tilde w_{tt}(x,t)+ (EI \tilde w_{xx})_{xx}(x,t)\nonumber\\
  &\hspace{2mm} + (EI(\nu_{xx})_{xx}(x)f(t)+\rho(x)\nu(x)\ddot{f}(t)=0, \label{eq:wbeam1}\\
  &m \tilde w_{tt}(0,t)+(EI \tilde w_{xx})_x(0,t)=0, \label{eq:wbeam2} \\
  &J\tilde w_{xtt}(0,t) - EI(0)\tilde w_{xx}(0,t) = 0, \label{eq:wbeam3} \\
  &\hspace{5mm}\tilde w(L,t) =0, \quad  \tilde w_x(L,t)=0. \label{eq:wbeam4}
\end{align}
The subspace $\tilde Z = \{[u \ v \ \alpha \ \beta] \in Z \m\big|\m u(L)=0\}$ is a closed subset of $Z$ and is itself a Hilbert space. The coupled PDE-ODE system \eqref{eq:wbeam1}-\eqref{eq:wbeam4} can be written as an abstract evolution equation on $\tilde Z$ as follows:
\begin{equation}\label{eq:evol}
 \dot{\tilde{z}}(t) = \Ascr \tilde z(t) + \Bscr\bigl[ f(t) \ \ \ddot f(t)\bigr] \qquad \forall \, t>0.
\end{equation}
The operators $\Ascr$ and $\Bscr$ are defined as follows: The domain of $\Ascr$ is $\Dscr(\Ascr) = \{[u \ v \ \alpha \ \beta] \in V \m\big|\m u(L)=v(L)=0\}$ and $\Ascr [u \ v \ \alpha \ \beta] = [v \ \ -\frac{(EI u_{xx})_{xx}}{\rho} \ \ -\frac{(EI u_{xx})_x(0)}{m} \ \ \frac{EI(0)u_{xx}(0)}{J}]$ for $[u \ v \ \alpha \ \beta]\in \Dscr(\Ascr)$  and $\Bscr: \rline\times \rline\mapsto \tilde Z$ is a bounded linear operator with $\Bscr [ a \ b] = [0 \ \frac{-a(EI  \nu_{xx})_{xx} -b\rho\nu}{\rho} \ 0 \ 0]$. The operator $\Ascr$ generates a $C_0$-semigroup $\tline$ on $\tilde Z$, see \cite[Section 5]{ZhWe:2011}.

Recall $z_0$ and $f$ and their properties from the statement of the proposition. Define $\tilde z_0=z_0-[\nu f(0) \ \nu\dot f(0) \ 0 \ 0]$. Then $\tilde z_0\in\Dscr(\Ascr)$. From \cite[Chapter 4, Section 2]{Pazy:1983} it follows that there exists a unique function $\tilde z \in C^1([0,T];\tilde Z)\cap C([0,T];\Dscr(\Ascr))$, given by the expression
\begin{equation} \label{eq:tildems}
 \tilde z(t) = \tline_t \tilde z_0 + \int_0^t \tline_{t-\tau}\Bscr \bigl[f(\tau) \ \ \ddot f(\tau)\bigr]\dd \tau,
\end{equation}
such that $\tilde z(0)=\tilde z_0$ and $\tilde z$ satisfies \eqref{eq:evol} for each $t\in (0,T)$. Equivalently, there exists a unique function $\tilde w \in C([0,T];H^4(0,L))\cap C^1([0,T];H^2(0,L))\cap C^2([0,T];L^2(0,L))$ with $\tilde w(0,\cdot), \tilde w_x(0,\cdot)\in C^2([0,T];\rline)$ such that
$ \tilde z(t) = [\tilde w(\cdot,t) \ \ \tilde w_t(\cdot,t) \ \ \tilde w_t(0,t) \ \ \tilde w_{xt}(0,t)] $ satisfies $\tilde z(0)=\tilde z_0$ and $\tilde w$ and $f$ satisfy \eqref{eq:wbeam1}-\eqref{eq:wbeam4} for each $t\in(0,T)$. Defining $w=\tilde w + \nu f$, it follows that there exists a unique function $w \in C([0,T];H^4(0,L))\cap C^1([0,T];H^2(0,L))\cap C^2([0,T];L^2(0,L))$ with $w(0,\cdot), w_x(0,\cdot)\in C^2([0,T];\rline)$ such that $z(t) = [w(\cdot,t) \ \ w_t(\cdot,t) \ \ w_t(0,t) \ \ w_{xt}(0,t)]$ satisfies $z(0)=z_0$ and $w$ and $f$ satisfy \eqref{eq:beam1}-\eqref{eq:beam4} for each $t\in(0,T)$. This completes the proof of the proposition.
\end{proof}

It is easy to see using \eqref{eq:tildems} that the unique classical solution $z$ in the above proof is given by the following formula: $z(t) = \tline_t \tilde z_0 + \int_0^t \tline_{t-\tau}\Bscr \bigl[f(\tau) \ \ \ddot f(\tau)\bigr]\dd \tau +[\nu f(t) \ \ \nu\dot f(t) \ \ 0 \ \ 0].$
The expression on the right side of this formula makes sense even for inputs $f\in C^2([0,T];\rline)$ and initial states $z_0 = [u_0 \ v_0 \ \alpha_0 \ \beta_0]\in Z$ satisfying $f(0)=u_0(L)$. Indeed it defines the unique strong solution of \eqref{eq:beam1}-\eqref{eq:beam4} for such inputs and initial states. But since the initial states and inputs of \eqref{eq:beam1}-\eqref{eq:beam4} considered in the rest of the paper satisfy the assumptions in Proposition \ref{pr:wellposed}, we will not discuss strong solutions any further.



\section{Generating functions} \label{sec3}

For each $s>0$, the Gevrey class $G_s[0,T]$ is the space of all the functions $y\in C^\infty[0,T]$ which satisfy the estimate $\sup_{t\in [0,T]} |y^{(m)}(t)|\leq D^{m+1} (m!)^s$ for some $D>0$ and all integers $m\geq 0$. Here $y^{(m)}$ denotes the $m^{\rm th}$-derivative of $y$.

In this paper we solve a motion planning problem for the beam model \eqref{eq:beam1}-\eqref{eq:beam4} by building on the generating functions approach to flatness proposed for 1D parabolic PDEs in \cite{MaRoRo:2016}. Accordingly we suppose that the solution $w$ of \eqref{eq:beam1}-\eqref{eq:beam4} on the interval $[0,T]$ can be expressed as
\begin{equation}\label{eq:formalsoln}
 w(x,t) = \sum_{k\geq 0} g_k(x)y_1^{(2k)}(t) + \sum_{k\geq 0} h_k(x)y_2^{(2k)}(t)
\end{equation}
for all $x\in[0,L]$ and $t\in [0,T]$. Here $g_k$ and $h_k$ belonging to $H^4(0,L)$ are the generating functions (see Proposition \ref{pr:genfnest}) and $y_1, y_2\in G_s[0,T]$ with $1<s<2$ are the flat outputs. We remark that while the solution of 1D parabolic PDEs can be expressed using a single flat output $y$, see \cite{MaRoRo:2016}, we need two flat outputs $y_1, y_2$ to express the solution of the beam model \eqref{eq:beam1}-\eqref{eq:beam4}. These flat outputs cannot be chosen independently and must satisfy \eqref{eq:2ndinput}.

The generating functions $g_k$ are obtained by solving a sequence of fourth-order linear ODEs recursively, on the interval $x\in[0,L]$, as follows:
\begin{equation}\label{eq:g0}
 g_0(x) = 1,
\end{equation}
$g_1$ is obtained by solving the ODE
\begin{align}
  &\quad(EI g_{1,xx})_{xx}(x) + \rho(x)g_0(x)=0, \label{eq:g1ODE}\\
  &\qquad g_1(0) = 0, \qquad  g_{1,x}(0) = 0, \label{eq:g1IC1}\\
  &g_{1,xx}(0) = 0, \qquad (EI g_{1,xx})_x(0) = -m,\label{eq:g1IC2}
\end{align}
and $g_k$ for $k\geq 2$ is obtained by solving the ODE
\begin{align}
  &\quad(EI g_{k,xx})_{xx}(x) + \rho(x)g_{k-1}(x)=0, \label{eq:gkODE}\\
  &\qquad g_k(0) = 0, \qquad  g_{k,x}(0) = 0,\label{eq:gkIC1}\\
  &\ \ g_{k,xx}(0) = 0, \qquad (EI g_{k,xx})_x(0) = 0.\label{eq:gkIC2}
\end{align}
The generating functions $h_k$ are also obtained by solving another sequence of fourth-order linear ODEs recursively, on the interval $x\in[0,L]$, as follows:
\begin{equation}\label{eq:h0}
 h_0(x) = x,
\end{equation}
$h_1$ is obtained by solving the ODE
\begin{align}
  &\ \ \quad (EI h_{1,xx})_{xx}(x) + \rho(x) h_{0}(x)=0,  \label{eq:h1ODE}\\
  &\ \ \qquad h_1(0) = 0, \qquad  h_{1,x}(0) = 0, \label{eq:h1IC1}\\
  &EI(0)h_{1,xx}(0) = J, \qquad (EI h_{1,xx})_x(0) = 0, \label{eq:h1IC2}
\end{align}
and $h_k$ for $k\geq 2$ is obtained by solving the ODE
\begin{align}
  &\quad (EI h_{k,xx})_{xx}(x)\! + \rho(x)h_{k-1}(x)=0, \label{eq:hkODE}\\
  &\qquad h_k(0) = 0, \qquad  h_{k,x}(0) = 0, \label{eq:hkIC1}\\
  & h_{k,xx}(0) = 0, \qquad (EI h_{k,xx})_x(0) = 0.\label{eq:hkIC2}
\end{align}

In the following proposition we show that the generating functions $g_k$ and $h_k$ belong to $H^4(0,L)$ and derive some estimates for them. Using these estimates, in Proposition \ref{pr:wisclasic} we show that if $y_1$ and $y_2$ satisfy \eqref{eq:2ndinput}, then the function $z\in C([0,T];Z)$ determined by the function $w$ in \eqref{eq:formalsoln} via the expression \eqref{eq:wdetz} is the classical solution of \eqref{eq:beam1}-\eqref{eq:beam4} for the initial state $z_0 = z(0)$ and input $f(t)=w(L,t)$.

\begin{proposition}\label{pr:genfnest}
For each $k\geq1$ the generating functions $g_k$ and $h_k$ belong to $H^4(0,L)$ and there exist positive constants $R_1$ and $R_2$ independent of $k$ such that the following estimates hold for $x\in [0,L]$:
\begin{align}
 &|g_k(x)|\leq \frac{R_1^k\, x^{4k-1}}{(4k-1)!}, \qquad   |g_{k,x}(x)| \leq \frac{R_1^k \,x^{4k-2}}{(4k-2)!}, \label{eq:estggx}\\[0.5ex]
 &|h_k(x)|\leq \frac{R_2^k\, x^{4k-2}}{(4k-2)!}, \qquad |h_{k,x}(x)| \leq \frac{R_2^k\, x^{4k-3}}{(4k-3)!}. \label{eq:esthhx}
\end{align}
\end{proposition}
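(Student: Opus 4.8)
The plan is to turn each of the two recursions of fourth-order initial value problems into an explicit iterated-integral representation and then run an induction on $k$ against the claimed bounds \eqref{eq:estggx}--\eqref{eq:esthhx}. Throughout, set $a:=\inf_{x\in[0,L]}EI(x)$, which is positive by hypothesis, and $b:=\sup_{x\in[0,L]}\rho(x)$, which is finite since $\rho\in C^4[0,L]$.

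First I would rewrite the ODEs in integral form. For $k\geq2$, put $\phi_k:=EI\,g_{k,xx}$; then \eqref{eq:gkODE} becomes $\phi_{k,xx}=-\rho\,g_{k-1}$, and the four conditions \eqref{eq:gkIC1}--\eqref{eq:gkIC2} are precisely $g_k(0)=g_{k,x}(0)=\phi_k(0)=\phi_{k,x}(0)=0$. Integrating $\phi_{k,xx}$ twice from $0$ and then $g_{k,xx}=\phi_k/EI$ twice from $0$ yields
\[
 g_k(x) = -\int_0^x (x-\eta)\frac{1}{EI(\eta)}\int_0^\eta (\eta-\xi)\rho(\xi)g_{k-1}(\xi)\,\dd\xi\,\dd\eta ,
\]
and differentiating once, $g_{k,x}(x) = -\int_0^x \frac{1}{EI(\eta)}\int_0^\eta (\eta-\xi)\rho(\xi)g_{k-1}(\xi)\,\dd\xi\,\dd\eta$; the formulas for $h_k$ with $k\geq2$ are identical with $g$ replaced by $h$. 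In the base cases only the non-homogeneous datum changes: for $g_1$ one has $\phi_1(0)=0$, $\phi_{1,x}(0)=-m$ by \eqref{eq:g1IC2}, hence $\phi_1(x)=-mx-\int_0^x(x-\xi)\rho(\xi)\,\dd\xi$ using $g_0\equiv1$, while for $h_1$ one has $\phi_1(0)=J$, $\phi_{1,x}(0)=0$ by \eqref{eq:h1IC2}, hence $\phi_1(x)=J-\int_0^x(x-\xi)\rho(\xi)\xi\,\dd\xi$ using $h_0(x)=x$; in each case $g_1$ resp. $h_1$ is recovered by integrating $\phi_1/EI$ twice from $0$. Since $\rho,EI\in C^4[0,L]$ and $EI\geq a>0$, the map $g_{k-1}\mapsto\phi_k$ sends $C[0,L]$ into $C^2[0,L]$ and then $g_{k,xx}=\phi_k/EI\in C^2[0,L]$, so $g_k\in C^4[0,L]$; starting from $g_0=1$ (and $h_0(x)=x$), an induction gives $g_k,h_k\in C^4[0,L]\subset H^4(0,L)$ for every $k\geq1$, which is the first assertion.

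For the estimates, I would dispose of $k=1$ directly. Using $|\phi_1(\eta)|\leq m\eta+\tfrac{b}{2}\eta^2$ for $g_1$ and $1/EI\leq1/a$, the iterated integrals give $|g_1(x)|\leq \tfrac{m x^3}{6a}+\tfrac{b x^4}{24a}$ and $|g_{1,x}(x)|\leq\tfrac{m x^2}{2a}+\tfrac{b x^3}{6a}$; since $x\leq L$ on $[0,L]$, both are dominated by the right-hand sides of \eqref{eq:estggx} (namely $R_1x^3/3!$ and $R_1x^2/2!$) as soon as $R_1\geq\tfrac{m}{a}+\tfrac{bL}{3a}$. Likewise $|\phi_1(\eta)|\leq J+\tfrac{b}{6}\eta^3$ for $h_1$ gives $|h_1(x)|\leq\tfrac{Jx^2}{2a}+\tfrac{b x^5}{120a}$ and $|h_{1,x}(x)|\leq\tfrac{Jx}{a}+\tfrac{b x^4}{24a}$, dominated by the right-hand sides of \eqref{eq:esthhx} once $R_2\geq\tfrac{J}{a}+\tfrac{bL^3}{24a}$. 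For the step $k\geq2$, assume \eqref{eq:estggx} at $k-1$, i.e. $|g_{k-1}(x)|\leq R_1^{k-1}x^{4k-5}/(4k-5)!$. Substituting this into the closed form and applying twice the elementary identity $\int_0^x (x-\eta)\eta^n\,\dd\eta=\tfrac{x^{n+2}}{(n+1)(n+2)}$ --- first with $n=4k-5$ on the inner integral, then with $n=4k-3$ on the outer one --- the product of denominators telescopes to exactly $(4k-1)!$, giving $|g_k(x)|\leq\tfrac{b}{a}\tfrac{R_1^{k-1}x^{4k-1}}{(4k-1)!}$, and the same two steps give $|g_{k,x}(x)|\leq\tfrac{b}{a}\tfrac{R_1^{k-1}x^{4k-2}}{(4k-2)!}$. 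Hence \eqref{eq:estggx} propagates to $k$ provided also $R_1\geq b/a$, so it suffices to fix at the outset $R_1=\max\{\tfrac{m}{a}+\tfrac{bL}{3a},\ \tfrac{b}{a}\}$; the induction for \eqref{eq:esthhx} is word-for-word the same with the polynomial exponents shifted down by one, and it closes with $R_2=\max\{\tfrac{J}{a}+\tfrac{bL^3}{24a},\ \tfrac{b}{a}\}$.

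The computations are all elementary integrations; the only delicate point is the bookkeeping that makes a \emph{single} constant $R_1$ (resp. $R_2$) serve simultaneously the irregular base case $k=1$ --- where the dominant contribution is the boundary datum $-m$ (resp. $J$), which is what produces the exponent $4k-1$ (resp. $4k-2$) rather than $4k$ in \eqref{eq:estggx} (resp. \eqref{eq:esthhx}) --- and the homogeneous recursion $k\geq2$, while verifying that the repeated Beta-type integral reconstitutes the factorial in the denominator exactly, so that no combinatorial loss accumulates as $k\to\infty$.
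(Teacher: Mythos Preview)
Your argument is correct and follows essentially the same route as the paper: convert the fourth-order initial value recursions into iterated integrals, handle the inhomogeneous base cases $k=1$ explicitly, and close the estimates by induction using $\rho\leq b$ and $EI\geq a$. The only cosmetic differences are that you write the double integrals with the $(x-\eta)$ kernel instead of as nested four-fold integrals, and you obtain slightly sharper explicit constants $R_1,R_2$ than the paper's choices $R_1=\frac{\max\rho+m}{\min EI}\max\{1,L\}$ and $R_2=\frac{\max\rho+J}{\min EI}\max\{1,L^3\}$.
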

\vspace{1mm}

\begin{proof}
\!\!Recall $g_0\!=\!1$ from \eqref{eq:g0}. Solving \eqref{eq:g1ODE}-\eqref{eq:g1IC2} we get\!\!
\begin{align}
  g_1(x) =& -\int_0^x \!\int_0^{s_1}\!\int_0^{s_2}\! \int_0^{s_3 }\! \frac{\rho(s_4)}{EI(s_2)}\dd s_4 \dd s_3 \dd s_2 \dd s_1 \nonumber\\
  &\hspace{3mm}-m\int_0^x \int_0^{s_1} \int_0^{s_2}\frac{1}{EI(s_2)}\dd s_3 \dd s_2 \dd s_1 . \label{eq:g1sol}
\end{align}
Since $EI$ and $\rho$ are strictly positive functions in $C^4[0,L]$ it follows from the above equation that $g_1 \in H^4(0,L)$ and the estimates in \eqref{eq:estggx} hold for $k=1$ with
\begin{equation}\label{eq:R1defn}
  R_1 = \Bigg(\frac{\max_{x\in [0,L]}\rho(x)+m}{\min_{x\in [0,L]} EI(x)} \Bigg) \max\{1,L\}.
\end{equation}
Solving \eqref{eq:gkODE}-\eqref{eq:gkIC2} for $g_k$ we get that for $k\geq2$,
\begin{equation}\label{eq:gksol}
 g_k(x) = -\int_{0}^{x}\!\int_{0}^{s_1}\!\!\int_{0}^{s_2}\!\! \int_{0}^{s_3}\! \frac{\rho(s_4)g_{k-1}(s_4)}{EI(s_2)}\dd s_4 \dd s_3 \dd s_2 \dd s_1.
\end{equation}
Let $R_1$ be as defined in \eqref{eq:R1defn}. Suppose that $g_k\in H^4(0,L)$ and the estimates in \eqref{eq:estggx} hold for some $k=n$ with $n\geq 1$. We can then conclude that the same is true for $k=n+1$ using \eqref{eq:gksol} with $k=n+1$, the estimates in \eqref{eq:estggx} with $k=n$ and the fact that $EI, \rho \in C^4[0,L]$ are strictly positive. So from the principle of mathematical induction it follows that $g_k\in H^4(0,L)$ and the estimates in \eqref{eq:estggx} hold for all $k\geq1$.


Next we will prove the estimates in \eqref{eq:esthhx}. Recall from \eqref{eq:h0} that $h_0(x)=x$. Solving \eqref{eq:h1ODE}-\eqref{eq:h1IC2} we get
\begin{align}
 h_1(x) =& -\int_0^x \!\int_0^{s_1}\!\int_0^{s_2}\! \int_0^{s_3 }\! \frac{s_4\rho(s_4)}{EI(s_2)}\dd s_4 \dd s_3 \dd s_2 \dd s_1 \nonumber\\
 &\hspace{5mm}+J\int_0^x \int_0^{s_1}\frac{1}{EI(s_2)}\dd s_2 \dd s_1. \label{eq:h1sol}
\end{align}
Since $EI$ and $\rho$ are strictly positive functions in $C^4[0,L]$ it follows from the above equation that $h_1 \in H^4(0,L)$ and the estimates in \eqref{eq:esthhx} hold for $k=1$ with \vspace{-1mm}
\begin{equation}\label{eq:R2defn}
  R_2 = \Bigg(\frac{\max_{x\in [0,L]}\rho(x)+J}{\min_{x\in [0,L]} EI(x)} \Bigg) \max\{1,L^3\}. \vspace{-1mm}
\end{equation}
Solving \eqref{eq:hkODE}-\eqref{eq:hkIC2} for $h_k$ we get that for $k\geq2$, \vspace{-0.5mm}
$$ h_k(x) = -\int_{0}^{x}\!\int_{0}^{s_1}\!\!\int_{0}^{s_2}\!\! \int_{0}^{s_3} \!\!\frac{\rho(s_4)h_{k-1}(s_4)}{EI(s_2)}\dd s_4 \dd s_3 \dd s_2 \dd s_1. \vspace{-0.5mm}$$
The claim that $h_k\in H^4(0,L)$ and the estimates in \eqref{eq:esthhx} hold for all $k\geq1$ with $R_2$ given in \eqref{eq:R2defn} can be established by mimicking the induction argument given below \eqref{eq:gksol}. \vspace{1mm}
\end{proof}

\begin{proposition}\label{pr:wisclasic}
Fix $T>0$ and $s\in (1,2)$. Suppose that the functions $y_1, y_2\in G_s[0,T]$ satisfy the following infinite order differential equation: for $t\in[0,T]$, \vspace{-0.5mm}
\begin{equation}\label{eq:2ndinput}
 \sum_{k\geq 0} g_{k,x}(L)y_1^{(2k)}(t) + \sum_{k\geq 0} h_{k,x}(L)y_2^{(2k)}(t) = 0. \vspace{-1mm}
\end{equation}
Then $w$ given in \eqref{eq:formalsoln} belongs to $C^\infty([0,T];C^4[0,L])$ so that $w(0,\cdot)$, $w(L,\cdot)$  and $w_x(0,\cdot)$ belong to $C^\infty([0,T];\rline)$. Furthermore, the function $z\in C([0,T];Z)$ determined by $w$ via \eqref{eq:wdetz} is the unique classical solution of \eqref{eq:beam1}-\eqref{eq:beam4} for the initial state $z_0=z(0)$ and input $f(t)=w(L,t)$.
\end{proposition}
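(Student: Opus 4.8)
The plan is to proceed in three steps: (i) show that the series in \eqref{eq:formalsoln}, together with all its termwise derivatives in $x$ up to order four and all its termwise derivatives in $t$, converge absolutely and uniformly on $[0,L]\times[0,T]$, which gives $w\in C^\infty([0,T];C^4[0,L])$ and hence the first assertion; (ii) verify that $w$ and $f:=w(L,\cdot)$ satisfy \eqref{eq:beam1}--\eqref{eq:beam4}; and (iii) invoke Proposition~\ref{pr:wellposed} with initial state $z_0=z(0)$ and input $f$ to obtain uniqueness.

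For step~(i) I would first extend the estimates of Proposition~\ref{pr:genfnest} to the second, third and fourth $x$-derivatives of $g_k$ and $h_k$. Differentiating the integral representations \eqref{eq:g1sol} and \eqref{eq:gksol} (and their $h$-analogues) and using that $1/EI\in C^4[0,L]$, one obtains $g_k,h_k\in C^4[0,L]$ together with constants $\hat R_1,\hat R_2>0$ independent of $k$ such that $|g_k^{(j)}(x)|\le \hat R_1^k x^{m_1}/m_1!$ and $|h_k^{(j)}(x)|\le \hat R_2^k x^{m_2}/m_2!$ for $j\in\{0,1,2,3,4\}$, $k\ge 1$ and $x\in[0,L]$, where $m_1=\max\{4k-1-j,0\}$ and $m_2=\max\{4k-2-j,0\}$. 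Combining these with the Gevrey bound $\sup_{t\in[0,T]}|y_i^{(m)}(t)|\le D^{m+1}(m!)^s$, the general term of the series for $\partial_t^n\partial_x^j w$ is controlled, for $k$ large, by a constant times $\hat R_1^k(1+L)^{4k}D^{2k+n+1}((2k+n)!)^s/(4k-5)!$ (and similarly with $\hat R_2$). Since $s<2$, a Stirling estimate shows that $((2k+n)!)^s/(4k)!\to 0$ faster than any geometric sequence, so all these series converge absolutely and uniformly on $[0,L]\times[0,T]$. The standard theorem on termwise differentiation of uniformly convergent series --- applied in $t$ with values in the Banach space $C^4[0,L]$, and in $x$ --- then yields $w\in C^\infty([0,T];C^4[0,L])$ with $\partial_t^n\partial_x^j w(x,t)=\sum_{k\ge0}g_k^{(j)}(x)\,y_1^{(2k+n)}(t)+\sum_{k\ge0}h_k^{(j)}(x)\,y_2^{(2k+n)}(t)$, and in particular $w(0,\cdot),w(L,\cdot),w_x(0,\cdot)\in C^\infty([0,T];\rline)$. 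This is where most of the work lies: the higher-derivative estimates require some care not to lose factorials, and the decisive point is the Stirling estimate, in which the hypothesis $s<2$ is exactly what makes $((2k)!)^s$ negligible against $(4k)!$.

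For step~(ii), the identities \eqref{eq:beam2} and \eqref{eq:beam3} are obtained by evaluating at $x=0$ the series for $w_{tt}(0,t)$, $(EIw_{xx})_x(0,t)$, $w_{xtt}(0,t)$ and $w_{xx}(0,t)$: using $g_0\equiv1$, $h_0(x)=x$ and the initial conditions \eqref{eq:g1IC1}--\eqref{eq:g1IC2}, \eqref{eq:gkIC1}--\eqref{eq:gkIC2}, \eqref{eq:h1IC1}--\eqref{eq:h1IC2}, \eqref{eq:hkIC1}--\eqref{eq:hkIC2}, all but one term in each series vanishes, giving $w_{tt}(0,t)=y_1''(t)$, $(EIw_{xx})_x(0,t)=-m\,y_1''(t)$, $w_{xtt}(0,t)=y_2''(t)$ and $w_{xx}(0,t)=\tfrac{J}{EI(0)}y_2''(t)$, whence \eqref{eq:beam2} and \eqref{eq:beam3} follow. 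For \eqref{eq:beam1}, termwise double $x$-differentiation together with $(EIg_{0,xx})_{xx}=(EIh_{0,xx})_{xx}=0$ and the ODEs \eqref{eq:g1ODE}, \eqref{eq:gkODE}, \eqref{eq:h1ODE}, \eqref{eq:hkODE} gives $(EIw_{xx})_{xx}(x,t)=-\rho(x)\sum_{k\ge1}[g_{k-1}(x)y_1^{(2k)}(t)+h_{k-1}(x)y_2^{(2k)}(t)]$, which after reindexing equals $-\rho(x)w_{tt}(x,t)$; these rearrangements are legitimate by the absolute convergence from step~(i). Finally \eqref{eq:beam4} holds since $f:=w(L,\cdot)$ and since $w_x(L,t)=\sum_{k\ge0}[g_{k,x}(L)y_1^{(2k)}(t)+h_{k,x}(L)y_2^{(2k)}(t)]=0$ by hypothesis \eqref{eq:2ndinput}.

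For step~(iii), step~(i) gives $w\in C([0,T];H^4(0,L))\cap C^1([0,T];H^2(0,L))\cap C^2([0,T];L^2(0,L))$ with $w(0,\cdot),w_x(0,\cdot)\in C^2([0,T];\rline)$, so with step~(ii) the function $z$ from \eqref{eq:wdetz} is a classical solution of \eqref{eq:beam1}--\eqref{eq:beam4} for $z_0=z(0)$ and $f$. It remains to observe that $z_0\in V$ --- its first two components lie in $C^4[0,L]\subset H^4(0,L)$, the conditions $w_x(L,0)=w_{xt}(L,0)=0$ hold since $w_x(L,\cdot)\equiv0$, and the last two components equal $w_t(0,0)$ and $w_{xt}(0,0)$ by construction --- while $f\in C^\infty([0,T];\rline)\subset C^3([0,T];\rline)$ with $f(0)=w(L,0)$ and $\dot f(0)=w_t(L,0)$; these are precisely the hypotheses of Proposition~\ref{pr:wellposed}, which then yields uniqueness and completes the proof.
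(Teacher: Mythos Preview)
Your proposal is correct and follows essentially the same route as the paper: both arguments first bound $g_k^{(j)}$ and $h_k^{(j)}$ for $j\le 4$ by $C^k/(4k-5)!$-type quantities, combine these with the Gevrey bound on $y_1,y_2$ (your Stirling estimate plays the role of the paper's ratio test, and the hypothesis $s<2$ is used at exactly the same place) to get uniform convergence and termwise differentiability, then verify \eqref{eq:beam1}--\eqref{eq:beam4} from the defining ODEs and initial data for $g_k,h_k$, and finally check $z_0\in V$ and the compatibility of $f$ so that Proposition~\ref{pr:wellposed} yields uniqueness. Your write-up is in fact slightly more explicit in step~(ii) (the computations $w_{tt}(0,t)=y_1''$, $(EIw_{xx})_x(0,t)=-m y_1''$, etc.) than the paper's, but the substance is the same.
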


\begin{proof}
Differentiating the expression for $g_k$ in \eqref{eq:gksol} as required and then using the estimate for $g_{k-1}$ from \eqref{eq:estggx} it follows that the functions $g_k$, $g_{k,x}$, $g_{k,xx}$, $g_{k,xxx}$ and $g_{k,xxxx}$ are uniformly bounded on $[0,L]$ by $C_g^k/(4k-5)!$. Here $C_g>0$ is independent of $k$. Similar estimates can be obtained for  $h_k$, $h_{k,x}$, $h_{k,xx}$, $h_{k,xxx}$ and $h_{k,xxxx}$.  Using these estimates and the fact that $y_1,y_2$ are in $G_s[0,T]$ for some $s\in (1,2)$ (which implies that $y_1^{(k)}$ and $y_2^{(k)}$ are uniformly bounded on $[0,T]$ by $C_y^{k+1}(k!)^s$ for some $C_y>0$ and all $k\geq0$), it follows via the Weierstrass M-test and the ratio test that the series for $w$ and its derivatives with respect to $x$ and $t$ (obtained by termwise differentiation of the series in \eqref{eq:formalsoln}) converge uniformly on $[0,L]\times[0,T]$. This implies that the derivatives of $w$ with respect to $x$ (up to four times) and $t$ (any number of times) are nothing but the series obtained by termwise differentiation of the series in \eqref{eq:formalsoln} and $w$ has the regularity mentioned in the statement of this proposition.

Using \eqref{eq:g0}, \eqref{eq:g1ODE}, \eqref{eq:gkODE} and \eqref{eq:h0}, \eqref{eq:h1ODE}, \eqref{eq:hkODE} it is easy to check that the series corresponding to $w_{tt}$ and $-(EI w_{xx})_{xx}$ are the same and hence $w$ satisfies the PDE \eqref{eq:beam1}. Taking $x=0$ in the series for $w_{tt}$, $(EI w_{xx})_x$, $w_{xtt}$ and $EIw_{xx}$ and using the initial conditions for $g_k$ in \eqref{eq:g1IC1}, \eqref{eq:g1IC2}, \eqref{eq:gkIC1}, \eqref{eq:gkIC2} and the initial conditions for by $h_k$ in \eqref{eq:h1IC1}, \eqref{eq:h1IC2}, \eqref{eq:hkIC1}, \eqref{eq:hkIC2} it follows that $w$ satisfies the boundary conditions \eqref{eq:beam2}-\eqref{eq:beam3}. Finally taking  $f(t)=w(L,t)$ and using \eqref{eq:2ndinput} (which means $w_x(L,t)=0$) we get that $w$ satisfies the boundary condition \eqref{eq:beam4}. In summary, $w$ satisfies \eqref{eq:beam1}-\eqref{eq:beam4} and has the desired regularity so that $z$ given by \eqref{eq:wdetz} is a classical solution of \eqref{eq:beam1}-\eqref{eq:beam4} for the initial state $z_0=z(0)$ and input $f(t)$. Recall the set $V$ from \eqref{eq:spaceV}. Differentiating \eqref{eq:2ndinput} with respect to $t$ we get $w_{xt}(L,t)=0$. Using this, \eqref{eq:2ndinput} and the regularity of $w$ we can conclude that $z_0\in V$. Also $f$ is compatible with $z_0$ by definition. The uniqueness of the classical solution $z$ now follows from Proposition \ref{pr:wellposed}.
\end{proof}


\section{Motion planning} \label{sec4} 

We present our main results on the motion planning problem for the beam model \eqref{eq:beam1}-\eqref{eq:beam4} in this section,
see Theorem \ref{th:main_result} and Remark \ref{rm:ss}. In the following proposition, we first describe an approach for constructing functions $y_1$ and $y_2$ that satisfy \eqref{eq:2ndinput}. Below we will need the estimate
\begin{equation} \label{eq:binom}
 (a+b)! \leq 2^{a+b} a! b! \FORALL a,b\in\nline.
\end{equation}
This estimate follows from the fact that the $(a+1)^{\rm th}$-term in the binomial expansion of $(1+1)^{a+b}$ is  less than $(1+1)^{a+b}$.

\begin{proposition}\label{pr:psolution}
Fix $s\in(1,2)$. Let the operators $\Lscr_1$ and $\Lscr_2$ be defined as follows: For $p \in G_s[0,T]$,
$$\Lscr_1 p =\sum_{k\geq 0} g_{k,x}(L) p^{(2k)}, \qquad \Lscr_2 p=\sum_{k\geq 0} h_{k,x}(L) p^{(2k)}.$$
Then $y_1=\Lscr_2 p$ and $y_2=-\Lscr_1 p$ belong to $G_s[0,T]$ for each $p \in G_s[0,T]$ and satisfy \eqref{eq:2ndinput}.
\end{proposition}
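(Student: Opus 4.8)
The plan is to observe that, with $y_1=\Lscr_2 p$ and $y_2=-\Lscr_1 p$, the left-hand side of \eqref{eq:2ndinput} is exactly $\Lscr_1(\Lscr_2 p)-\Lscr_2(\Lscr_1 p)$, so the proposition reduces to two tasks: showing that $\Lscr_1$ and $\Lscr_2$ map $G_s[0,T]$ into itself (so that $y_1,y_2\in G_s[0,T]$ and every series in sight converges), and showing that these two infinite-order operators commute on $G_s[0,T]$.

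First I would collect the ingredients: the estimates $|g_{k,x}(L)|\le R_1^k L^{4k-2}/(4k-2)!$ and $|h_{k,x}(L)|\le R_2^k L^{4k-3}/(4k-3)!$ from Proposition \ref{pr:genfnest}, and the Gevrey bound $\sup_{[0,T]}|p^{(m)}|\le C^{m+1}(m!)^s$. To show $\Lscr_1 p\in G_s[0,T]$ I would differentiate termwise (legitimate once uniform convergence of the differentiated series is checked, which the same estimates give) to write $(\Lscr_1 p)^{(m)}=\sum_{k\ge0}g_{k,x}(L)\,p^{(2k+m)}$, bound $(2k+m)!\le 2^{2k+m}(2k)!\,m!$ via \eqref{eq:binom}, and factor out everything depending only on $m$. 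This leaves $\sup_{[0,T]}|(\Lscr_1 p)^{(m)}|\le \bigl(C 2^s\bigr)^m(m!)^s\,C\sum_{k\ge0}\frac{R_1^k L^{4k-2}C^{2k}2^{2ks}((2k)!)^s}{(4k-2)!}$, and the numerical series converges by the ratio test, whose consecutive-term ratio behaves like $\mathrm{const}\cdot k^{2s-4}\to0$ because $s<2$; this is the one place the hypothesis $s\in(1,2)$ is used. Hence $\sup_{[0,T]}|(\Lscr_1 p)^{(m)}|\le D^{m+1}(m!)^s$ for a suitable $D$, i.e. $\Lscr_1 p\in G_s[0,T]$, and the identical argument with $R_2$ gives $\Lscr_2 p\in G_s[0,T]$; in particular $y_1,y_2\in G_s[0,T]$.

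Then I would prove the commutation. Writing $(\Lscr_2 p)^{(2j)}=\sum_{k\ge0}h_{k,x}(L)p^{(2k+2j)}$, one gets $\Lscr_1\Lscr_2 p=\sum_{j,k}g_{j,x}(L)h_{k,x}(L)\,p^{(2j+2k)}$ and likewise $\Lscr_2\Lscr_1 p=\sum_{j,k}h_{j,x}(L)g_{k,x}(L)\,p^{(2j+2k)}$. To manipulate these I would first establish absolute convergence of the double series: bounding $(2j+2k)!\le 2^{2j+2k}(2j)!(2k)!$, the sum of absolute values splits as a product of two numerical series of the kind already shown convergent in the previous step. Absolute convergence then permits grouping by $n=j+k$, yielding $\Lscr_1\Lscr_2 p=\sum_{n\ge0}\bigl(\sum_{j=0}^n g_{j,x}(L)h_{n-j,x}(L)\bigr)p^{(2n)}$ and the same expression for $\Lscr_2\Lscr_1 p$ with coefficient $\sum_{j=0}^n h_{j,x}(L)g_{n-j,x}(L)$; the substitution $j\mapsto n-j$ shows these two convolutions agree for every $n$. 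Therefore $\Lscr_1\Lscr_2 p=\Lscr_2\Lscr_1 p$, so $\Lscr_1 y_1+\Lscr_2 y_2=\Lscr_1\Lscr_2 p-\Lscr_2\Lscr_1 p=0$, which is \eqref{eq:2ndinput}.

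The commutation is morally trivial, since $\Lscr_1$ and $\Lscr_2$ are both power series in the single operator $\mathrm d^2/\mathrm dt^2$ and such series commute formally. The real work, and the main obstacle, is the analytic bookkeeping: obtaining factorial estimates clean enough (via \eqref{eq:binom}, to decouple the $m$- or $n$-dependence from the summation index) to conclude simultaneously that $\Lscr_1,\Lscr_2$ preserve $G_s[0,T]$ and that the double series for $\Lscr_1\Lscr_2 p$ converges absolutely so that the Fubini-type rearrangement above is justified — all of which hinges on $s<2$ through the bound $k^{2s-4}\to0$.
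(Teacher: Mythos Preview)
Your proposal is correct and follows essentially the same route as the paper: first use the estimates from Proposition~\ref{pr:genfnest} together with \eqref{eq:binom} (splitting $(2k+m)!$) to show that $\Lscr_1,\Lscr_2$ preserve $G_s[0,T]$ when $s<2$, then show $\Lscr_1\Lscr_2 p=\Lscr_2\Lscr_1 p$ by establishing absolute convergence of the double series and invoking Fubini. The only cosmetic difference is that for the absolute convergence of the double sum you apply \eqref{eq:binom} to $(2j+2k)!$ and factor the bound as a product of two already-convergent series, whereas the paper applies \eqref{eq:binom} to $(4j-2)!(4k-3)!$ and sums along diagonals $j+k=l$; both arguments are equivalent in spirit and strength.
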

\vspace{1mm}
\begin{proof}
Let $p\in G_s[0,T]$. Then
\begin{equation}\label{eq:pest}
 \sup_{t\in [0,T]}|p^{(k)}(t)|\leq D^{k+1} (k\m!)^s \FORALL k\geq 0
\end{equation}
and some constant $D>0$. Using this estimate and the estimate for $h_{k,x}$ in \eqref{eq:esthhx} we can conclude by applying the Weierstrass M-test and the ratio test that for each $n\geq 0$ the series for $y_1^{(n)}$, obtained by differentiating the series for $\Lscr_2 p$ termwise $n$-times with respect to $t$, converges uniformly on $[0,T]$. Hence $y_1^{(n)} = \Lscr_2 p^{(n)}$ and using \eqref{eq:pest} and \eqref{eq:esthhx} we get
$$ \sup_{t\in [0,T]}|y_1^{(n)}(t)|\leq \sum_{k\geq 0} R_2^k L^{4k-3} D^{2k+n+1}\frac{((2k+n)!)^s}{(4k-3)!}. $$
Using \eqref{eq:binom} with $a=2k$ and $b=n$ to bound $(2k+n)!$ in the above inequality we get $\sup_{t\in [0,T]}|y_1^{(n)}(t)|\leq 2^{ns}D^n (n!)^s \sum_{k\geq 0} C_0^{k+1} ((2k)!)^s\big/(4k-3)!$ for some $C_0>0$. Since $s<2$, applying the ratio test it follows that the series $\sum_{k\geq 0} C_0^{k+1} ((2k)!)^s\big/(4k-3)!$ converges and therefore $\sup_{t\in [0,T]}|y_1^{(n)}(t)|\leq C^{n+1} (n!)^s$ for a $C>0$ and all $n\geq0$, i.e. $y_1\in G_s[0,T]$. We can similarly show that \vspace{1mm} $y_2\in G_s[0,T]$.

Note that $\Lscr_1 \Lscr_2 p = \sum_{k\geq 0}\sum_{j\geq 0} g_{k,x}(L) h_{j,x}(L)p^{(2k+2j)}$ and $\Lscr_2 \Lscr_1 p = \sum_{j\geq 0}\sum_{k\geq 0} g_{k,x}(L) h_{j,x}(L)p^{(2k+2j)}$ for each $p\in G_s[0,T]$. So $\Lscr_1 \Lscr_2 p $ and $\Lscr_2 \Lscr_1 p$ are both double sums which only differ in the order of summation. We claim that the double sum $\sum_{l \geq 0} \sum_{j+k=l} g_{j,x}(L)h_{k,x}(L) p^{(2l)}(t)$ is absolutely convergent for each $t\in[0,T]$. Indeed, using \eqref{eq:pest}, \eqref{eq:estggx} and \eqref{eq:esthhx} we get
\begin{align*}
 &\sum_{l \geq 0} \sum_{j+k=l}| g_{j,x}(L)h_{k,x}(L) p^{(2l)}(t)| \\
 \leq & \sum_{l \geq 0} \sum_{j+k=l} \frac{R_1^j R_2^k L^{4j-2} L^{4k-3} D^{2l+1} ((2l)!)^s}{(4j-2)!(4k-3)!}\\
 \leq & \sum_{l\geq 0}\sum_{j+k=l}\frac{R^{l+1} ((2l)!)^s}{(4l-5)!} = \sum_{l\geq 0}\frac{R^{l+1} ((2l)!)^s (l+1)}{(4l-5)!} <\infty.
\end{align*}
Here $R>0$ is some constant, the second inequality is derived by using the estimate in \eqref{eq:binom} with $a=4j-2$ and $b=4k-3$ and the last inequality is obtained by applying the ratio test. Therefore from Fubini's theorem it follows that the double sum $\sum_{j,k\geq 0} g_{j,x}(L)h_{k,x}(L) p^{(2l)}$ is independent of the order of summation, i.e. $\Lscr_1 \Lscr_2 p=\Lscr_2 \Lscr_1 p$. So $y_1=\Lscr_2 p$ and $y_2=-\Lscr_1 p$ satisfy $\Lscr_1 y_1 - \Lscr_2 y_2=0$ or equivalently \eqref{eq:2ndinput}. This completes the proof of the proposition.
\end{proof}

\begin{remark} \label{rm:summary}
We can rewrite \eqref{eq:formalsoln} concisely as
\begin{equation} \label{eq:formalsolnRE}
 w(x,t) = [\Wscr_1 y_1](x,t) + [\Wscr_2 y_2](x,t)
\end{equation}
for $x\in[0,L]$ and $t\in[0,T]$. Here the operators $\Wscr_1$ and $\Wscr_2$ are defined as follows: $[\Wscr_1 y](x,t) = \sum_{k\geq 0} g_k(x) y^{(2k)}(t)$ and $[\Wscr_2 y](x,t) = \sum_{k\geq 0} h_k(x) y^{(2k)}(t)$ for $y\in G_s[0,T]$ with $s\in(1,2)$. For any $p\in G_s[0,T]$ with $s\in(1,2)$, it follows from Proposition \ref{pr:psolution} that $y_1 = \Lscr_2 p$ and $y_2 = -\Lscr_1 p$ are in $G_s[0,T]$ and satisfy \eqref{eq:2ndinput}. Letting $y_1 = \Lscr_2 p$ and $y_2 = -\Lscr_1 p$ in \eqref{eq:formalsolnRE} and appealing to Proposition \ref{pr:wisclasic} we can conclude that $w = \Wscr_1 \Lscr_2 p - \Wscr_2\Lscr_1p$ is in $C^\infty([0,T];C^4[0,L])$. Moreover, the function $z\in C([0,T];Z)$ determined by $w$ via the expression $z(t)=[w(\cdot,t) \ w_t(\cdot,t) \ w_t(0,t) \ w_{xt}(0,t)]$ for all $t\in[0,T]$ is the unique classical solution of \eqref{eq:beam1}-\eqref{eq:beam4} for the initial state $z_0 = z(0)$ and input $f(t)=w(L,t)$. \hfill$\square$
\end{remark}

Recall the operators $\Lscr_1$, $\Lscr_2$, $\Wscr_1$, $\Wscr_2$ from Proposition \ref{pr:psolution} and Remark \ref{rm:summary}. In the next theorem, building on the results in Propositions \ref{pr:genfnest}, \ref{pr:wisclasic} and \ref{pr:psolution}, we prove by construction the existence of a control input $f$ which transfers the beam model \eqref{eq:beam1}-\eqref{eq:beam4} between any two states belonging to a certain subspace of $Z$ over a prescribed time-interval. \vspace{1mm}

\begin{theorem}\label{th:main_result}
Fix a time $T>0$. Consider the set
\begin{align*}
 &\quad M = \Big\{[v(\cdot,0) \ v_t(\cdot,0) \ v_t(0,0) \ v_{xt}(0,0)] \in Z \m\Big |\m    \nonumber\\
  & v=\Wscr_1 \Lscr_2 p - \Wscr_2 \Lscr_1 p  \ \textrm{for some} \ p\in G_s[0,T], \ s\in (1,2) \Big\}.
\end{align*}
Let $z_0\in M$ and $z_T \in M$ be given. Then there exists an $f\in C([0,T];\rline)$ and a unique classical solution $z\in C([0,T];Z)$ of \eqref{eq:beam1}-\eqref{eq:beam4} on the time interval $[0,T]$ for the initial state $z_0$ and input $f$ such that $z(T)=z_T$.
\end{theorem}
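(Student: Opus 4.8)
The plan is to exploit the structure of the set $M$ together with the flatness parametrization from Remark~\ref{rm:summary}. The key observation is that $M$ is defined as the image of the map $p \mapsto z(0)$ where $z$ is the classical solution generated by $p\in G_s[0,T]$, and crucially the generating operators $\Wscr_1,\Wscr_2,\Lscr_1,\Lscr_2$ only involve even-order time derivatives. Since $z_0\in M$, there is $p_0\in G_{s_0}[0,T]$ with $z_0$ determined by $w_0 = \Wscr_1\Lscr_2 p_0 - \Wscr_2\Lscr_1 p_0$ via $z_0 = [w_0(\cdot,0)\ w_{0,t}(\cdot,0)\ w_{0,t}(0,0)\ w_{0,xt}(0,0)]$; similarly there is $p_T\in G_{s_T}[0,T]$ with $z_T$ determined by $w_T = \Wscr_1\Lscr_2 p_T - \Wscr_2\Lscr_1 p_T$, but here I want $z_T$ realized as the \emph{terminal} value, i.e.\ $z_T = [w_T(\cdot,T)\ w_{T,t}(\cdot,T)\ w_{T,t}(0,T)\ w_{T,xt}(0,T)]$. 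The way to achieve this is to first replace $p_T(t)$ by $\tilde p_T(t) := p_T(T-t)$, which is again Gevrey of the same class on $[0,T]$, and to note that because $\Wscr_i$ and $\Lscr_i$ involve only \emph{even} derivatives, the solution $\tilde w_T$ generated by $\tilde p_T$ satisfies $\tilde w_T(x,t) = w_T(x,T-t)$; hence $\tilde w_T$ reproduces $z_T$ at $t=T$. So without loss of generality both endpoints are realized, $z_0$ at $t=0$ by $p_0$ and $z_T$ at $t=T$ by some $p_1\in G_{s_1}[0,T]$.

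Next I would build a single connecting function $p\in G_s[0,T]$ by a standard Gevrey interpolation/transition argument. Choose $s\in(1,\min(s_0,s_1))$ if necessary to have a common class, or more simply work in $G_s[0,T]$ for the larger of the two Gevrey orders. Pick a Gevrey cut-off function $\phi\in G_s[0,T]$ of order $s$ (a standard bump, which exists precisely because $s>1$) with $\phi\equiv 1$ on a neighborhood of $0$ and $\phi\equiv 0$ on a neighborhood of $T$, so that all derivatives of $\phi$ vanish at both endpoints except the $0$-th at $t=0$. Set $p := \phi\, p_0 + (1-\phi)\, p_1$. Then $p\in G_s[0,T]$ (the Gevrey class is an algebra), and because $\phi$ is flat to infinite order at both endpoints, $p^{(k)}(0) = p_0^{(k)}(0)$ and $p^{(k)}(T) = p_1^{(k)}(T)$ for all $k\geq 0$. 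Now let $w := \Wscr_1\Lscr_2 p - \Wscr_2\Lscr_1 p$ and let $z$ be the classical solution it determines via Remark~\ref{rm:summary}, with input $f(t) = w(L,t)$; by that remark $z\in C([0,T];Z)$ is the unique classical solution of \eqref{eq:beam1}-\eqref{eq:beam4} for initial state $z(0)$ and this input, and $f\in C([0,T];\rline)$ since $w(L,\cdot)\in C^\infty([0,T];\rline)$.

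It remains to check the endpoint matching $z(0)=z_0$ and $z(T)=z_T$. The entries of $z(t)$ are $w(\cdot,t)$, $w_t(\cdot,t)$, $w_t(0,t)$, $w_{xt}(0,t)$, and each of these is, for fixed $x$, a convergent series in the quantities $p^{(2k)}(t)$ and $p^{(2k+1)}(t)$ (after one termwise $t$-differentiation) — by the convergence established in the proof of Proposition~\ref{pr:wisclasic}. At $t=0$ all these derivatives agree with those of $p_0$, so $z(0)$ equals the state generated by $p_0$ at $t=0$, namely $z_0$; at $t=T$ all these derivatives agree with those of $p_1$, so $z(T)$ equals the state generated by $p_1$ at $t=T$, namely $z_T$. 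This closes the argument.

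The main obstacle I anticipate is the endpoint-reversal step for $z_T$: one must be careful that the definition of $M$ pins $z_T$ to the \emph{initial} value of its generating trajectory, whereas the theorem needs it as a \emph{terminal} value, and the time-reversal trick $p_T(t)\mapsto p_T(T-t)$ works only because $\Wscr_i,\Lscr_i$ contain exclusively even time-derivatives — this is exactly where the evenness built into \eqref{eq:formalsoln} and \eqref{eq:2ndinput} is essential. A secondary technical point is exhibiting the Gevrey cut-off $\phi$ of order $s\in(1,2)$ with all one-sided derivatives vanishing at $0$ and $T$; this is classical (e.g.\ built from the bump $t\mapsto \exp(-t^{-1/(s-1)})$), and one should verify the algebra property of $G_s[0,T]$ so that products and sums of Gevrey-$s$ functions remain Gevrey-$s$, which follows from \eqref{eq:binom} much as in the proof of Proposition~\ref{pr:psolution}.
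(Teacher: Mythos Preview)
Your strategy coincides with the paper's: pick Gevrey generators $p_0,p_T$ for $z_0,z_T$ from the definition of $M$, blend them with a Gevrey transition, and set $f(t)=w(L,t)$. With the specific $\psi$ of \eqref{eq:psi} one checks $\psi(T-t)=1-\psi(t)$, so the paper's interpolant $p(t)=p_0(t)\psi(t)+p_T(T-t)\psi(T-t)$ is exactly your $\phi\,p_0+(1-\phi)\,p_1$ with $\phi=\psi$ and $p_1(t)=p_T(T-t)$.

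There is, however, a genuine gap in the endpoint-reversal step --- precisely the obstacle you anticipated, and your resolution of it is incomplete. From $\tilde p_T(t)=p_T(T-t)$ and the even-order structure of $\Wscr_i,\Lscr_i$ you correctly obtain $\tilde w_T(x,t)=w_T(x,T-t)$. But the state carries the velocity components $w_t$, and time reversal flips their sign: $\tilde w_{T,t}(\cdot,T)=-w_{T,t}(\cdot,0)$. Hence the state produced by $\tilde p_T$ at $t=T$ is $[w_T(\cdot,0)\ \ {-w_{T,t}(\cdot,0)}\ \ {-w_{T,t}(0,0)}\ \ {-w_{T,xt}(0,0)}]$, which equals $z_T$ only when the velocity part of $z_T$ vanishes (as for steady states, cf.\ Remark~\ref{rm:ss}); the even-derivative argument protects $w$ but not $w_t$. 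For the record, the paper's assertion ``$p^{(n)}(T)=p_T^{(n)}(0)$'' contains the same slip: in fact $p^{(n)}(T)=(-1)^n p_T^{(n)}(0)$. The repair is easy and stays within your framework: what you actually need is some $p_1\in G_s[0,T]$ with $p_1^{(n)}(T)=p_T^{(n)}(0)$ for all $n\ge0$ (no sign). Since $|p_T^{(n)}(0)|\le D^{n+1}(n!)^s$, the Borel theorem for Gevrey classes (available because $s>1$) furnishes such a $p_1$; with this choice your blend $p=\phi\,p_0+(1-\phi)\,p_1$ gives $p^{(n)}(0)=p_0^{(n)}(0)$ and $p^{(n)}(T)=p_T^{(n)}(0)$, and the rest of your argument goes through verbatim.
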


\begin{proof}
From Remark \ref{rm:summary} it is evident that $M$ is a well-defined and non-empty set. Since $z_0,z_T\in M$ and $G_{s_1}[0,T] \subset G_{s_2}[0,T]$ for $s_1\leq s_2$, it follows from the definition of $M$  that there exist $p_0, p_T\in G_s[0,T]$ with $s\in (1,2)$ such that $v_0 = \Wscr_1 \Lscr_2 p_0 - \Wscr_2\Lscr_1 p_0$, $v_T = \Wscr_1 \Lscr_2 p_T - \Wscr_2\Lscr_1 p_T$,
$z_0= [v_0(\cdot,0) \ v_{0,t}(\cdot,0) \ v_{0,t}(0,0) \ v_{0,tx}(0,0)]$ and $z_T = [v_T(\cdot,0) \ v_{T,t}(\cdot,0) \ v_{T,t}(0,0) \ v_{T,tx}(0,0)]$.

For $t\in[0,T]$ let
\begin{equation}\label{eq:psi}
 \psi(t)=1-\Big(\int_0^t \psi_0(\tau)\dd \tau\bigg / \int_0^T \psi_0(\tau)\dd \tau\Big),
\end{equation}
where $\psi_0(t)=\exp\left(-\left[\left(1-\frac{t}{T} \right)\frac{t}{T} \right]^{-\frac{1}{s-1}} \right)$ for $t\in (0,T)$ and $\psi_0(0)=\psi_0(T)=0$. Then $\psi \in G_s[0,T]$ and
\begin{equation}\label{eq:psival}
 \psi(0) = 1, \quad \psi(T)=0, \quad  \psi^{(k)}(0)=\psi^{(k)}(T)=0
\end{equation}
for all $k\geq 1$, see \cite{MeScKu:2010}. For all $t\in [0,T]$ define
\begin{equation}\label{eq:pdef}
 p(t) = p_0(t)\psi(t)+p_T(T-t)\psi(T-t).
\end{equation}
Since $p_0, p_T,\psi \in G_s[0,T]$ and $G_s[0,T]$ is closed under addition and multiplication of functions \cite[Proposition 1.4.5]{Ro:1993} we get that $p\in G_s[0,T]$. Let $w = \Wscr_1 \Lscr_2 p-\Wscr_2 \Lscr_1 p$. Then $z(t) = [w(\cdot,t) \ w_t(\cdot,t) \  w_{xt}(\cdot,t) \ w_{xt}(\cdot,t)]\in C([0,T];Z)$ is the unique classical solution of \eqref{eq:beam1}-\eqref{eq:beam4} for the initial state $z(0)$ and input $f(t)=w(L,t)$, see Remark \ref{rm:summary}. We will now complete the proof of this theorem by establishing that $z(0)=z_0$ and $z(T)=z_T$.

The expression $w = \Wscr_1 \Lscr_2 p - \Wscr_2 \Lscr_1 p$ is the same as \eqref{eq:formalsoln} with $y_1 = \Lscr_2 p$ and $y_2 = - \Lscr_1 p$. The series for $w(\cdot,t)$, $w_t(\cdot,t)$, $w_t(0,t)$ and $w_{xt}(0,t)$ can be obtained by termwise differentiation of the series in \eqref{eq:formalsoln} (see the proof of Proposition \ref{pr:wisclasic}) so that $z(t)= [w(\cdot,t) \ w_t(\cdot,t)\ w_t(0,t) \ w_{xt}(0,t)]$ can be written as $\sum_{k\geq0} A_k y_1^{(k)}(t) + \sum_{k\geq0} B_k y_2^{(k)}(t)$ for $t\in[0,T]$. Here $A_k, B_k\in H^4(0,L)\times H^4(0,L)\times\rline\times\rline$ for $k\geq0$. Since $y_1^{(k)}(t) = [\Lscr_2 p^{(k)}](t)$ and $y_2^{(k)}(t) = -[\Lscr_1 p^{(k)}](t)$ (see the proof of Proposition \ref{pr:psolution}), using the definition of the operators $\Lscr_1$ and $\Lscr_2$ we get for $t\in[0,T]$,
\begin{equation} \label{eq:zdsum}
 z(t)=\sum_{k\geq0} \sum_{n\geq0} \bigl(A_k h_{n,x}(L)-B_k g_{n,x}(L)\bigr) p^{(n+k)}(t).
\end{equation}
Applying the above argument to $v_0 = \Wscr_1 \Lscr_2 p_0 - \Wscr_2 \Lscr_1 p_0$ and $v_T = \Wscr_1 \Lscr_2 p_T - \Wscr_2 \Lscr_1 p_T$ (instead of $w$) we get that $z_0=\sum_{k\geq0}  \sum_{n\geq0} \bigl( A_k h_{n,x}(L) - B_k g_{n,x}(L)\bigr) p_0^{(n+k)}(0)$, and $z_T=\sum_{k\geq0}  \sum_{n\geq0} \bigl( A_k h_{n,x}(L) - B_k g_{n,x}(L)\bigr) p_T^{(n+k)}(0)$. Differentiating \eqref{eq:pdef} $n$-times and then using \eqref{eq:psival} we get $p^{(n)}(0) = p_0^{(n)}(0)$ and $p^{(n)}(T) = p_T^{(n)}(0)$ for all $n\geq 0$. It now follows from the expressions for $z(0)$ and $z(T)$ from \eqref{eq:zdsum} and the expressions for $z_0$ and $z_T$ given above that $z(0)=z_0$ and $z(T)=z_T$. This completes the proof.
\end{proof}

\begin{remark} \label{rm:ss}
Let $c\in\rline$. For the initial state $[u_{ss} \ 0 \ 0 \ 0]\in Z$ with $u_{ss}(x)=c$ for all $x\in[0,L]$ and the constant input $f_{ss}\in C([0,T];\rline)$ with $f_{ss}(t)=c$ for all $t\in[0,T]$, note that the constant function $z_{ss}(t)=[u_{ss} \ 0 \ 0 \ 0]$ for $t\in [0,T]$ is the classical solution of \eqref{eq:beam1}-\eqref{eq:beam4}, see Definition \ref{def:classical}. We call $z_{ss}(0)$ the steady-state of \eqref{eq:beam1}-\eqref{eq:beam4} corresponding to the constant input $f_{ss}$. Each such steady-state is a rest configuration of the beam corresponding to some fixed position of the cantilever joint.

Let $v = \Wscr_1 \Lscr_2 p - \Wscr_2 \Lscr_1 p$ with $p(t)=c$ for all $t\in[0,T]$ and some $c\in\rline$. From the definition of the operators $\Lscr_1$, $\Lscr_2$, $\Wscr_1$, $\Wscr_2$ it follows that $[v(\cdot,0) \ v_t(\cdot,0) \ v_t(0,0) \ v_{xt}(0,0)]=[u_{ss} \ 0 \ 0 \ 0]$, where $u_{ss}(x)=c$ for all $x\in [0,L]$. In other words, steady-states of \eqref{eq:beam1}-\eqref{eq:beam4} belong to the set $M$. It now follows from Theorem \ref{th:main_result} that we can find an input $f$ to transfer  \eqref{eq:beam1}-\eqref{eq:beam4} between any two steady-states. \hfill$\square$
\end{remark}

\section{Numerical and experimental results} \label{sec5} 

Our experimental setup consists of a non-uniform moving cantilever beam, with linearly-varying width, made of stainless steel. One end of the beam supports a tip-mass, while the other end is attached via a cantilever joint to a cart mounted on a Hiwin single axis robot, see Figure 1. The robot is driven by a Yasaka AC servo motor which fixes the cart position as per the control input it receives from a Raspberry Pi 4B microprocessor. The dynamics of the beam is described by the 
model \eqref{eq:beam1}-\eqref{eq:beam4} with  $L=0.5\m\textrm{m}$, $m=0.402\m\textrm{kg}$, $J=1.9 \times 10^{-4} \m\textrm{kg\m m}^2$ and $\rho(x)=0.11(1+3x)\m\textrm{kg/m}$ and $EI(x)=0.297(1+3x)\m\textrm{N/m}$ for $x\in [0,L]$.

We consider two problems to illustrate our solution to the motion planning problem presented in Theorem \ref{th:main_result}.  In both the problems we take the time of transfer to be $T=3\m\textrm{s}$ and the final state to be $z_T =0$. In Problem 1 we take the initial state to be $z_0 = [v_0(\cdot,0) \ v_{0,t}(\cdot,0) \ v_{0,t}(0,0) \ v_{0,xt}(0,0)]$, where $v_0 = \Wscr_1\Lscr_2 p_0 - \Wscr_2 \Lscr_1 p_0$ with $p_0(t) = 1 + 10t^2e^{-2t}$ for $t\geq 0$. Note that $z_0\in M$ is not a steady-state of \eqref{eq:beam1}-\eqref{eq:beam4}. In Problem 2 we take the initial state to be the steady-state of \eqref{eq:beam1}-\eqref{eq:beam4} given by $z_0 = [0.4 \ 0 \ 0 \ 0]$, which is obtained from the above expressions for $z_0$ and $v_0$ by taking $p_0(t)=0.4$ for all $t\geq 0$, see Remark \ref{rm:ss}. We solve both the problems using the procedure described in the proof of Theorem \ref{th:main_result}. Accordingly we choose $\psi$ to be the function in \eqref{eq:psi} with $s=1.5$ and $T=3$ and define $p$ via \eqref{eq:pdef} by taking $p_T=0$ (since $z_T=0$).  The required control input is $f(t)=w(L,t)$, where $w=[\Wscr_1 \Lscr_2  - \Wscr_2 \Lscr_1]p$. 
Using the expressions for $\Lscr_1$, $\Lscr_2$, $\Wscr_1$, $\Wscr_2$ and changing the order of the double summations we have $f(t) = \lim_{N\to\infty} \sum_{l=0}^{N}\sum_{j+k=l} \big[g_k(L)h_{j,x}(L)+h_k(L)g_{j,x}(L) \big]p^{(2l)}(t)$. (Note that $g_k$ and $h_k$ are computed using the expressions in the proof of Proposition \ref{pr:genfnest}.) This series converges rapidly and by truncating it with $N=20$ we compute a very good approximation for the inputs which solve the two problems being considered, see Figure 2.
\vspace{-2mm}

$$\includegraphics{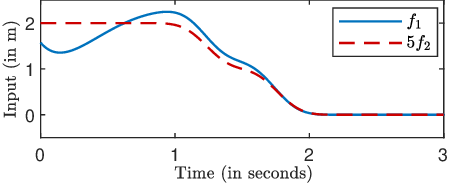}\vspace{-1mm}$$
\centerline{ \parbox{3.2in}{\small
Figure 2. Plot of the inputs $f_1$ and $f_2$ which solve Problem 1 and Problem 2, respectively. \vspace{1mm}}}
$$\includegraphics{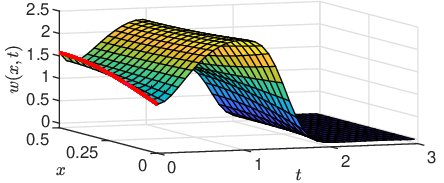}\vspace{-1mm}$$
\centerline{ \parbox{3.2in}{\small
Figure 3. Displacement profile $w(x,t)$ of the beam model \eqref{eq:beam1}-\eqref{eq:beam4} for the initial state and the input in Problem 1. The displacement profile starts from a non-constant (in $x$) function and settles down to the zero function in 3 seconds as expected.}}
\vspace{0.5mm}

We discretized the spatial derivatives in the beam model \eqref{eq:beam1}-\eqref{eq:beam4} using the finite-difference method (with step-size 1/300) to obtain a set of ODEs which serve as a numerical model for the beam model. We validated our solution for the motion planning problems, Problems 1 and 2, presented above by simulating the numerical model with the appropriate initial states $z_0$ and the control inputs shown in Figure 2. Figure 3 shows the beam displacement profile $w(x,t)$ obtained from our simulation for Problem 1. As expected (recall $z_T=0$), the displacement profile settles down to the zero function within 3 seconds. Figure 4 shows the displacement trajectory $w(0,t)$ of the tip-mass obtained from our simulation for Problem 2. As expected the tip-mass is initially at rest with $w(0,0)=0.4$ and again at rest finally with $w(0,3)=0$. We have implemented the control input in Problem 2 on our experimental setup in Figure 1 and observed that the beam is transferred from one steady-state to another (at a distance of 0.4\m{m}) within 3 seconds. We recorded the experiment using a camera at 240\m{fps} frame rate; The video of the experiment is available here:  {https://youtu.be/2IvgK5pK7Og}. By tracking the position of a green marker placed on the tip-mass using hue-based segregation and contour detection algorithms, we extracted the trajectory of the tip-mass from the video, see Figure 4. \vspace{-2mm}

$$\includegraphics{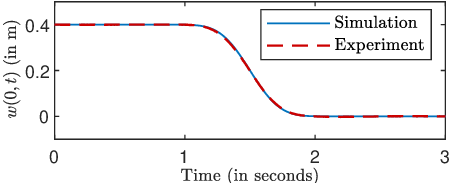}$$
\centerline{ \parbox{3.2in}{\small
Figure 4. Tip-mass position $w(0,t)$ obtained from the simulation and experiment for Problem 2 match closely. }}


\end{document}